\documentclass[letterpaper, 10 pt, journal, twoside]{IEEEtran}
\pagestyle{empty}

\IEEEoverridecommandlockouts

\usepackage{cite}
\usepackage{amsmath,amssymb,amsfonts}
\usepackage{bm}
\usepackage{graphicx}
\usepackage{textcomp}
\usepackage{xcolor,color}
\usepackage{amsthm}
\usepackage[top=0.833in, left=0.667in, right=0.667in, bottom=0.597in, includefoot]{geometry}
\usepackage{hyperref}
\usepackage{tikz}
\usepackage{tikz-qtree}
\usetikzlibrary{positioning,automata,backgrounds}
\usepackage{algorithm}
\usepackage[noend]{algpseudocode}
    
\makeatletter
\let\MYcaption\@makecaption
\makeatother 

\usepackage[font=footnotesize]{subcaption}

\makeatletter
\let\@makecaption\MYcaption
\makeatother



\newcommand{\until}{\mathcal{U}}
\newcommand{\always}{\square}
\newcommand{\eventually}{\lozenge}

\newcommand{\x}{\bm{x}}
\renewcommand{\u}{\bm{u}}
\newcommand{\y}{\bm{y}}
\newcommand{\A}{\bm{A}}
\newcommand{\B}{\bm{B}}
\newcommand{\C}{\bm{C}}
\newcommand{\D}{\bm{D}}
\newcommand{\Q}{\bm{Q}}
\newcommand{\R}{\bm{R}}

\newtheorem{definition}{Definition}

\newtheorem{theorem}{Theorem}
\newtheorem{proposition}{Proposition}
\newtheorem{remark}{Remark}
\newtheorem{problem}{Problem}

\newtheorem{example}{Example}

\newcommand{\hl}[1]{#1}

\setlength{\textfloatsep}{0pt}
\setlength\intextsep{0pt}

\begin{document}

\title{Mixed-Integer Programming for Signal Temporal Logic with Fewer Binary Variables}

\author{Vince Kurtz and Hai Lin
\thanks{The authors are with the Departments of Electrical Engineering, University of Notre Dame, Notre Dame, IN, 46556 USA. \texttt{\{vkurtz,hlin1\}@nd.edu}}
\thanks{This work was supported by NSF Grants CNS-1830335, IIS-2007949.}
}

\maketitle
\thispagestyle{empty}

\begin{abstract}
    Signal Temporal Logic (STL) provides a convenient way of encoding complex control objectives for robotic and cyber-physical systems. The state-of-the-art in trajectory synthesis for STL is based on Mixed-Integer Convex Programming (MICP). The MICP approach is sound and complete, but has limited scalability due to exponential complexity in the number of binary variables. In this letter, we propose a more efficient MICP encoding for STL. Our new encoding is based on the insight that disjunction can be encoded using a logarithmic number of binary variables and conjunction can be encoded without binary variables. We demonstrate in simulation examples that our proposed approach significantly outperforms the state-of-the-art for long and complex specifications. \hl{Open-source software is available: \texttt{\url{https://stlpy.readthedocs.io}}. }
\end{abstract}
\begin{IEEEkeywords}
    Autonomous systems, Robotics, Optimization.
\end{IEEEkeywords}

\section{Introduction}\label{sec:intro}

\IEEEPARstart{S}{ignal} Temporal Logic (STL) is a powerful means of expressing complex control objectives. STL combines boolean operators (``and'', ``or'', ``not'') with temporal operators (``always'', ``eventually'', ``until'') and is defined over continuous-valued signals, making it an appealing choice for dynamical systems ranging from mobile robots \cite{lindemann2019coupled,gilpin2021smooth,mehdipour2019arithmetic} and quadrotors \cite{pant2018fly} to high-DoF manipulators \cite{kurtz2020trajectory} and traffic networks \cite{sadraddini2016model,coogan2015traffic}.

In this letter, we address the trajectory synthesis problem for discrete-time linear systems subject to STL specifications over convex predicates. That is, given a system and a specification, find a satisfying trajectory \hl{that minimizes a given cost.}

The state-of-the-art solution uses Mixed-Integer Convex Programming (MICP) \cite{belta2019formal}. This approach, first proposed in \cite{raman2014model} and later refined in \cite{sadraddini2015robust,sadraddini2018formal}, is sound and complete: any solution returned by the algorithm satisfies the specification, and the algorithm finds a solution whenever one exists. Furthermore, MICP finds globally optimal solutions, enabling maximally robust or minimum energy trajectories. 

The major limitation is scalability. MICP's worst-case complexity is exponential in the number of binary variables, and the state-of-the-art in STL synthesis introduces a binary variable for each predicate at each timestep. This limits MICP methods to simple specifications and short time horizons \cite{belta2019formal}. 

This exponential complexity should not be too surprising. STL synthesis is NP-hard, so exponential worst-case complexity is inevitable for any sound and complete algorithm (assuming $P \neq NP$). But despite this NP-hardness, MICP performs well in practice for moderately sized problems. With this in mind, much research in STL synthesis has gone toward reducing MICP problem sizes. For example, rewriting formulas in Positive Normal Form (PNF) significantly reduces the number of binary variables and constraints \cite{belta2019formal,sadraddini2018formal}. Waypoint-based abstractions can further limit the problem size \cite{sun2022multi}.

\begin{figure}
    \centering
    \begin{tikzpicture}[framed]
    \hl{
        \Tree [.{$\always_{[1,T]} a \land \eventually_{[1,T]} b$ }
                [.{$\always_{[1,T]} a$} 
                 [.{\hl{$a_1$}} ] 
                 [.{\hl{$a_2$}} ] 
                 [.{$\dots$} ]
                 [.{\hl{$a_{T}$}} ] 
                ]
                [.{$\eventually_{[1,T]} b$}
                 [.{\hl{$b_1$}} ] 
                 [.{\hl{$b_2$}} ] 
                 [.{$\dots$} ]
                 [.{\hl{$b_{T}$}} ] 
                ]
              ]}
    \end{tikzpicture}
    
    \caption{Tree representation of the STL formula \hl{$\always_{[1,T]}a \land \eventually_{[1,T]} b$}. For this example, our approach reduces the number of binary variables from $2T$ to $\lceil \log_2(T+1) \rceil$, reducing computational complexity from $O(2^T)$ to $O(T)$.}
    \label{fig:simple_tree}
\end{figure}
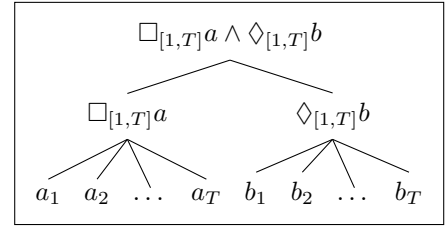

Another possibility is to formulate MICPs with tighter convex relaxations. While this does not improve the worst-case complexity, a tighter convex relaxation increases branch-and-bound efficiency and often improves performance in practice \cite{conforti2014integer}. This approach has been pursued for Metric Temporal Logic in \cite{kurtz2021more} and piecewise-affine reachability in \cite{marcucci2021shortest}. 

Much STL research in recent years has focused on avoiding MICP entirely, and instead designing efficient heuristic methods which are sound but not complete. Such methods include gradient-based optimization \cite{pant2017smooth,mehdipour2019arithmetic,gilpin2021smooth,leung2020back}, differential dynamic programming \cite{kurtz2020trajectory}, control barrier functions \cite{lindemann2018control} and neural-network-based methods \cite{liu2021recurrent}. These approaches tend to improve scalability, especially with respect to the specification time horizon, but offer limited formal guarantees and often struggle to handle more complex specifications. 

In this letter, we return to STL synthesis via MICP and propose a new MICP encoding that uses fewer binary variables. Our proposed approach draws on a logarithmic encoding of SOS1 constraints \cite{vielma2011modeling} to reduce the number of binary variables. \hl{Similar to \cite{belta2019formal,sadraddini2018formal}, we consider formulas in PNF,} but rather than introducing a binary variable for each predicate at each timestep, our proposed approach introduces binary variables only for disjunctive subformulas. Furthermore, each disjunctive subformula is encoded using a logarithmic number of binary variables. This approach leads to particularly notable reductions in complexity for long-horizon specifications, \hl{though the standard MICP encoding \cite{belta2019formal} tends to outperform our proposed approach for short horizons}.

For example, consider the specification shown in Figure~\ref{fig:simple_tree}, which reads ``always $a$ and eventually $b$''. The standard MICP encoding requires a linear number of binary variables with respect to time horizon $T$, resulting in exponential computational complexity. Our proposed encoding uses a logarithmic number of binary variables, resulting in linear complexity. 

Our primary contributions are as follows:
\begin{itemize}
    \item We propose a new MICP encoding for STL specifications that uses fewer binary variables for most specifications,
    \item Our proposed approach is sound and complete, and guaranteed to find a globally optimal solution,
    \item We propose a tree data structure for STL formulas, and show how this data structure can be exploited to further reduce the number of binary variables,
    \item We demonstrate the scalability of our approach in various simulated scenarios involving a mobile robot. \hl{Code for reproducing these results is available online \cite{stlpy}.}
\end{itemize}

The remainder of this letter is organized as follows: background and a formal problem statement are presented in Section~\ref{sec:background}, our main results are presented in Section~\ref{sec:main}, and simulation experiments are reported in Section~\ref{sec:experimenal}. 

\section{Background}\label{sec:background}

\subsection{System Definitions}

In this letter, we consider discrete-time linear systems:
\begin{equation}\label{eq:sys}
\begin{gathered}
    \x_{t+1} = \A \x_t + \B \u_t, \\
    \y_t = \C \x_t + \D \u_t,
\end{gathered}
\end{equation}
where $\x_t \in X \subseteq \mathbb{R}^n$ is the system state at timestep $t$, $\u_t \in {U} \subseteq \mathbb{R}^m$ is the control input, and $\y_t \in {Y} \subseteq \mathbb{R}^p$ is the output. We assume that ${X}$, ${U}$ and ${Y}$ are convex sets, and $\A,\B,\C,\D$ are matrices of appropriate dimension.

Given an initial state $\x_0$ and a control tape $\u = \u_0,\u_1,\dots$, the output signal $\y = \y_0,\y_1,\dots$ is generated by applying (\ref{eq:sys}). We denote the output signal starting at timestep $t$ by
\begin{equation*}
    (\y,t) = \y_t,\y_{t+1},\y_{t+2},\dots.
\end{equation*}

\subsection{Signal Temporal Logic}\label{subsec:stl}

We consider STL formulas over convex predicates in PNF: further details and discussion of STL formulas can be found in \cite{belta2019formal}. Such formulas follow the syntax
\begin{equation}\label{eq:syntax}
   \varphi := \pi \mid \bigvee_i \varphi_i \mid \bigwedge_i \varphi_i \mid \always_{[t_1,t_2]} \varphi \mid \eventually_{[t_1,t_2]} \varphi \mid \varphi_1 \until_{[t_1,t_2]} \varphi_2,
\end{equation}
where predicates $\pi$ are defined by convex functions $g^\pi : {Y} \to \mathbb{R}$ \hl{such that $\{\y \mid g^\pi(\y) \leq 0\}$ are convex sets}. In addition to standard boolean operations ``and'' ($\land$) and ``or'' ($\lor$), STL includes temporal operators ``always'' ($\always$), ``eventually'' ($\eventually$) and ``until'' ($\until$). In this letter we consider only bounded-time specifications, i.e., those with $t_2$ finite. 

\begin{remark}
    This syntax is slightly more general than that with linear predicates only, as considered in \cite{raman2014model,sadraddini2015robust,sadraddini2018formal,sun2022multi}, since such convex predicates include linear ones $g^\pi(\y_t) = a^T\y - b$. Note that negation ($\lnot$) can be applied to linear predicates, but not to more general convex predicates such as ellipses. 
\end{remark}

The semantics, or meaning, of STL formulas are defined as follows, where we denote the fact that signal $\y$ satisfies specification $\varphi$ as $\y \vDash \varphi$. Here we present the quantitative semantics of STL, which defines a scalar ``robustness value'' $\rho^{\varphi}(\y)$ which is positive only if $\y \vDash \varphi$:
\begin{itemize}
    \item $\y \vDash \varphi$ if and only if $\rho^{\varphi}((\y,0)) \geq 0$
    \item $\rho^{\pi}((\y,t)) = -g^{\pi}(\y_t)$\hl{, since $\y \vDash \pi$ if $g^\pi(\y_0) \leq 0$}
    \item $\rho^{\varphi_1 \land \varphi_2}( (\bm{y},t) ) = \min\big(\rho^{\varphi_1}( (\bm{y},t) ), \rho^{\varphi_2}( (\bm{y},t) ) \big)$
    \item $\rho^{\varphi_1 \lor \varphi_2}( (\bm{y},t) ) = \max\big(\rho^{\varphi_1}( (\bm{y},t) ), \rho^{\varphi_2}( (\bm{y},t) ) \big)$
    \item $\rho^{\eventually_{[t_1,t_2]} \varphi}( (\bm{y},t) ) = \max_{t'\in[t+t_1, t+t_2]}\big(\rho^\varphi( (\bm{y},t') )\big)$
    \item $\rho^{\always_{[t_1,t_2]} \varphi}( (\bm{y},t) ) = \min_{t'\in[t+t_1, t+t_2]}\big(\rho^\varphi( (\bm{y},t') )\big)$
    \item $\rho^{\varphi_1 \until_{[t_1,t_2]} \varphi_2}( (\bm{y},t) ) =  \max_{t'\in[t+t_1, t+t_2]}\bigg( \\ \min\Big(\Big[\rho^{\varphi_1}( (\bm{y},t') ), \min_{t'' \in[t+t_1,t']}\big(\rho^{\varphi_2}( (\bm{y},t'') )\big)\Big]\Big)\bigg)$.
\end{itemize}


For a given STL formula $\varphi$, we denote the \hl{time horizon} (minimum number of timesteps after which a signal's satisfaction status is fixed) as $T$.
\hl{
We denote the number of predicates as $N^\pi$ and the number of disjunctive subformulas as $N^\lor$:}

\hl{
\begin{definition} Disjunctive (sub)formulas are defined as follows:
    \begin{itemize}
        \item $\pi$, $\bigwedge_{j=1}^n \varphi_i$, and $\eventually_{[t_1,t_2]} \varphi_i$ are not disjunctive;
        \item $\bigvee_{j=1}^k \varphi_i$ is disjunctive, with $k$ disjunctions;
        \item $\eventually_{[t_1,t_2]} \varphi$ is disjunctive, with $t_2-t_1$ disjunctions;
        \item $\varphi_1 \until_{[t_1,t_2]} \varphi_2$ is disjunctive, with $t_2-t_1$ disjunctions.
    \end{itemize}
\end{definition}}

\hl{ 
We denote the number of disjunctions associated with the $i^{th}$ disjunctive subformula as $N_i$. For example, the formula $\left( a \lor b \right) \land \eventually_{[0,T]} c$ has two disjunctive subformulas, with $N_1 = 2$ and $N_2 = T+1$ respectively.}

\subsection{Problem Statement}\label{subsec:problem}

Given an initial state, an STL specification, and a quadratic running cost, our problem is to find a minimum cost trajectory that satisfies the STL specification. More formally, this problem can be stated as an optimization problem:
\begin{problem}\label{prob:main}
    Given a system of the form (\ref{eq:sys}), an initial state $\x_0$, and an STL specification $\varphi$, solve the optimization problem
    \begin{subequations}\label{eq:main_optimization}
    \begin{align}
        \min_{\x,\u,\y} ~& -\rho^{\varphi}(\y) + \sum_{t=0}^T \x_t^T\Q \x_t + \u_t^T\R\u_t \label{eq:quadratic_cost}\\
        \mathrm{s.t.~} & \x_{t+1} = \A \x_t + \B \u_t \label{eq:state_dynamics}\\
                     & \y_t = \C \x_t + \D \u_t \label{eq:output_dynamics}\\
                     & \x_0 \mathrm{~fixed} \label{eq:initial_state}\\
                     & \x_t \in {X}, \u_t \in {U}, \y_t \in {Y} \label{eq:set_containment}\\
                     & \rho^{\varphi}(\y) \geq 0 \label{eq:positive_robustness}
    \end{align}
    \end{subequations}
    
    where $\Q \succeq 0$ and $\R \succeq 0 $ are symmetric cost matrices. 
\end{problem}

Apart from the robustness measure $\rho^{\varphi}(\y)$, (\ref{eq:main_optimization}) is a convex optimization problem, for which many mature solvers exist \cite{boyd2004convex}. With this in mind, our primary goal is to introduce mixed-integer constraints to define $\rho^{\varphi}(\y)$, preferably using as few binary variables as possible.

\begin{remark}
    Note that with $\Q = \R = 0$, this optimization problem maximizes the robustness measure $\rho$, resulting in a maximally satisfying solution. Similarly, the cost matrices $\Q$ and $\R$ can be scaled to trade off maximizing STL satisfaction and minimizing the running cost. 
\end{remark}

\section{Main Results}\label{sec:main}

The current state-of-the-art is to encode (\ref{eq:main_optimization}) as an MICP with $T N^{\pi}$ binary variables (one per predicate per timestep) \cite{belta2019formal}. In this section, we propose an encoding with $\sum_{i=1}^{N^\lor} \lceil \log_2(N_i+1) \rceil$ binary variables, where $N_i$ is the number of disjunctions associated with the $i^{th}$ disjunctive subformula. \hl{Note that $N_i$ may vary with the time horizon $T$, depending on the specification}.

We first introduce a tree data structure for representing STL formulas. Then, after presenting our proposed encoding, we show how simple offline operations on this data structure can further reduce the number of binary variables. 

\subsection{Representing STL Specifications}\label{subsec:tree_structure}

In this section, we present a tree data structure for STL formulas. A similar approach was taken in \cite{leung2020back}, with the primary aim of enabling efficient gradient-based optimization. Here, our primary aim is to enable more efficient MICP. 

Our tree data structure is formally defined as follows:
\begin{definition}
    An STL Tree $\mathcal{T}^\varphi$ is a tuple $(S, \tau, \circ, \rho^{\varphi})$, where
    \begin{itemize}
        \item $S = [\mathcal{T}^{\varphi_1},\mathcal{T}^{\varphi_2},\dots,\mathcal{T}^{\varphi_N}]$ is a list of $N$ subtrees (i.e., children) associated with each subformula;
        \item $\tau = [t^{\varphi_1}, t^{\varphi_2}, \dots, t^{\varphi_N}]$ is a list of time steps corresponding to the $N$ subformulas;
        \item $\circ \in \{\land, \lor\}$ is a combination type;
        \item $\rho^{\varphi}$ is the STL robustness measure: a function which maps signals $\y$ to scalar values.
    \end{itemize}
\end{definition}

All nodes in an STL Tree are themselves STL Trees, and correspond to subformulas. Leaves of an STL Tree correspond to predicates $\pi$. For a given STL formula $\varphi$, a corresponding tree $\mathcal{T}^{\varphi}$ can be built up recursively by following the syntax presented in Section~\ref{subsec:stl}.

\begin{example}
    Consider the specification $\varphi = \always_{[1,T]}a \land \eventually_{[1,T]} b$, where $a$ and $b$ are predicates. The associated STL Tree is shown in Figure~\ref{fig:simple_tree}. There are two $\land$-type nodes (associated with $\varphi$ and $\always a$), one $\lor$-type node ($\eventually b$), and 2T predicates.
\end{example}

Roughly speaking, the original MICP encoding of \cite{raman2014model} introduces a binary variable for each node in this tree, while \cite{sadraddini2018formal} introduces a binary variable for each leaf\footnote{More precisely, \cite{sadraddini2018formal} introduces a binary variable for each predicate at each timestep. This is typically slightly more than the number of leaves. }. Our proposed encoding, outlined in the following section, instead introduces binary variables only for disjunctive nodes. Furthermore, each disjunctive node only introduces $\lceil \log_2(N+1) \rceil$ binary variables, where $N$ is the number of subformulas.

\subsection{New Mixed-Integer Encoding}\label{subsec:new_encoding}

We now present our proposed mixed-integer encoding. Given an STL formula $\varphi$, we start by constructing a corresponding STL Tree $\mathcal{T}^\varphi$. We define a continuous variable $\rho \geq 0$ which, with some liberty of notation, will provide a lower bound on the robustness measure $\rho^{\varphi}(\y)$. \hl{Note that if the robustness measure is negative, a violation of the constraint $\rho \geq 0$ will render the optimization problem infeasible.}
%

\hl{For each node $\mathcal{T}^\phi$ in the tree, we introduce a \textit{continuous} variable $z^\phi \in [0, 1]$. Ultimately, we will add constraints such that $z^\phi = 1$ if subformula $\phi$ is enforced and $z^\phi=0$ otherwise. While these variables take binary values, declaring them as continuous improves MICP efficiency, since MICP complexity is dominated by the number of binary variables}.

We start with the leaf nodes. Each leaf is associated with a predicate ($\pi$) and a timestep $t$. For each leaf we constrain
\begin{equation}\label{eq:bigM}
    \rho \leq -g^\pi(\y_t) + M(1-z^\pi),
\end{equation}
where $M$ is a large scalar constant. This ``big-M'' constraint enforces $\rho \leq -g^\pi(\y_t)$ if $z^\pi = 1$, but leaves $\rho$ and $\y_t$ essentially unconstrained if $z^\pi = 0$ \cite{conforti2014integer}.

Next we consider all of the $\land$-type nodes in the tree. For the corresponding formula $\phi$ to hold, all of the subformulas $\phi_1,\phi_2,\dots,\phi_N$ must also hold. \hl{In other words, $z^\phi = 1$ implies $z^{\phi_i} = 1$}. This can be encoded without adding binary variables, simply by applying the linear constraints:
\begin{equation}\label{eq:conjunction}
    z^{\phi} \leq z^{\phi_i} \quad \forall i = 1,2,\dots,N.
\end{equation}

Disjunctive constraints present more of a challenge. For these $\lor$-type constraints, we need to enforce at least one $z^{\phi_i} = 1$ if $z^{\phi} = 1$. A natural way to do this would be to define $z^{\phi_i}$ as binary variables and add the constraint: 
\begin{equation*}
    z^{\phi} \leq \sum_{i=1}^N z^{\phi_i}.
\end{equation*}
This approach was taken in \cite{raman2014model}, and adds $N$ binary variables. 

We now show how results in disjunctive programming \cite{vielma2011modeling} can be leveraged to encode disjunctive constraints using only $\log_2(N+1)$ binary variables. To introduce these constraints, we first define SOS1 sets:

\begin{definition}
    A vector $\lambda = [\lambda_1,\lambda_2,\dots,\lambda_n]^T$ is a Special Ordered Set of Type 1 (SOS1)
    if the following conditions hold:
    \begin{itemize}
        \item $\lambda_i \geq 0$
        \item $\sum_{i} \lambda_i = 1$
        \item $\exists ~ j \in [1,...,n]$ s.t. $\lambda_j = 1$
    \end{itemize}
\end{definition}
In other words, a vector $\lambda$ is in SOS1 if it contains exactly one nonzero element, and that element is equal to 1.

Perhaps surprisingly, we can constrain $\lambda \in SOS1$ using a logarithmic number of binary variables and constraints:
\begin{theorem}[\cite{vielma2011modeling}]\label{theorem:sos1}
    Assume that $n$ is a power of 2. Let $I = {1,2,\dots,n}$ and $B : I \to \{0,1\}^{\log_2(n)}$ be any bijective function. Then the following constraints enforce $\lambda \in SOS1$:
    \begin{gather*}
        \lambda_i \geq 0, \quad
        \sum_{i} \lambda_i = 1, \\
        \sum_{j \in J^+(k,B)} \lambda_j \leq \zeta_k, \quad
        \sum_{j \in J^0(k,B)} \lambda_j \leq (1-\zeta_k)\\
        \zeta_{k} \in \{0,1\} \quad \forall k \in [1,2,...,\log_2(n)] 
    \end{gather*}
    where $J^+(k,B) = \{i \mid k \in \text{supp}(B(i)) \}, J^0(k,B) = \{i \mid k \notin \text{supp}(B(i))\}$\hl{, and $\text{supp}(B(i))$ denotes the support of $B(i)$.}
\end{theorem}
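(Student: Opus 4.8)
The plan is to read $B$ as assigning to each index $i \in I$ a binary ``address'' $B(i) \in \{0,1\}^{\log_2 n}$, and to observe that the vector of branching variables $\zeta = (\zeta_1,\dots,\zeta_{\log_2 n})$ is itself such an address. Since $n$ is a power of $2$, the map $B$ is a bijection between $I$ and the entire cube $\{0,1\}^{\log_2 n}$, so there is a unique index $j^\star \in I$ with $B(j^\star) = \zeta$. I claim the listed constraints force $\lambda_{j^\star} = 1$ and $\lambda_j = 0$ for every $j \neq j^\star$, which is exactly the statement $\lambda \in SOS1$ with the nonzero entry at $j = j^\star$.

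For the main step, fix any $j \neq j^\star$. Then $B(j) \neq \zeta$, so the two strings disagree in at least one coordinate $k$. There are two cases. If $k \in \mathrm{supp}(B(j))$ while $\zeta_k = 0$, then $j \in J^+(k,B)$, and nonnegativity of $\lambda$ together with $\sum_{i \in J^+(k,B)} \lambda_i \leq \zeta_k = 0$ gives $0 \leq \lambda_j \leq 0$, hence $\lambda_j = 0$. If instead $k \notin \mathrm{supp}(B(j))$ while $\zeta_k = 1$, then $j \in J^0(k,B)$, and $\sum_{i \in J^0(k,B)} \lambda_i \leq 1 - \zeta_k = 0$ forces $\lambda_j = 0$ in the same way. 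Either way $\lambda_j = 0$. Since $j \neq j^\star$ was arbitrary and $\sum_i \lambda_i = 1$, it follows that $\lambda_{j^\star} = 1$, so all three SOS1 conditions hold.

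To see that the encoding is exact, i.e. that it excludes no SOS1 point (which is what makes the overall MICP encoding complete), I would run the argument in reverse: given $\lambda \in SOS1$ with its single unit entry at index $j$, set $\zeta := B(j) \in \{0,1\}^{\log_2 n}$. Then for each $k$, $\sum_{i \in J^+(k,B)} \lambda_i$ equals $\lambda_j = 1$ when $k \in \mathrm{supp}(B(j))$ (i.e. $\zeta_k = 1$) and equals $0$ otherwise, so it equals $\zeta_k$; symmetrically $\sum_{i \in J^0(k,B)} \lambda_i = 1 - \zeta_k$. Hence every constraint holds (with equality), so every SOS1 vector is feasible. Together with the previous paragraph, the projection of the feasible set onto the $\lambda$-variables is precisely the SOS1 set.

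I do not expect a genuine obstacle: this is essentially an addressing/counting argument and the proof is short. The one point that requires care is the case split on the differing coordinate $k$ and matching the direction of the inequalities --- $\zeta_k = 0$ must be paired with the $J^+$ family and $\zeta_k = 1$ with the $J^0$ family --- together with the use of the ``$n$ a power of $2$'' hypothesis to guarantee that $B$ hits every address, so that $j^\star$ (and, in the converse, $\zeta = B(j)$) is well defined. If one wished to drop the power-of-two assumption, the standard remedy is to pad with dummy indices $\lambda_i \equiv 0$ up to the next power of $2$, but that lies outside what the statement claims.
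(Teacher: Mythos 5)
Your proof is correct. The paper does not prove Theorem~\ref{theorem:sos1} itself --- it is quoted from \cite{vielma2011modeling} --- and your argument (identify the unique $j^\star$ with $B(j^\star)=\zeta$, kill every other $\lambda_j$ via the coordinate where $B(j)$ and $\zeta$ disagree, then use $\sum_i\lambda_i=1$; plus the converse direction showing every SOS1 point is feasible) is exactly the standard argument for that cited result, with the case split on $J^+$ versus $J^0$ handled correctly.
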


For the bijective function $B(i)$, the mapping from index $i$ to its binary representation is a natural choice. In that case, the values of binary variables $\zeta_k$ essentially represent which index $i$ is holds the nonzero value. If $n$ is not a power of 2, we can simply add $e^{\lceil \log_2 n\rceil} - n$ elements along with linear constraints $\lambda_i = 0$ that force these elements to be zero.

This brings us back to encoding disjunctive STL subformulas. For these $\lor$-type nodes in the STL tree, we need $z^\phi \implies \bigvee_{i=1}^N z^{\phi_i}$. This can be written as a SOS1 constraint
\begin{equation}\label{eq:disjunction}
    [1-z^\phi,z^{\phi_1},z^{\phi_2},\dots,z^{\phi_N}] \in SOS1,
\end{equation}
which can be encoded with $\lceil \log_2(N+1) \rceil$ binary variables.

\hl{ Note that SOS1 ensures \textit{exactly} one nonzero element, while disjunction requires \textit{at least} one nonzero element. This is not an issue, however, because we only need a sufficient (not necessary) condition for $\y \vDash \phi$.  Focusing on specifications in PNF allows us to guarantee both soundness and completeness while only requiring this sort of sufficient condition \cite{belta2019formal,sadraddini2018formal}. 
}

Finally, for the root node we constrain
\begin{equation}\label{eq:z_eq_1}
    z^{\varphi} = 1,
\end{equation}
to ensure that the overall specification $\varphi$ must be satisfied. 

\begin{proposition}
    \hl{If the} constraints (\ref{eq:bigM})-(\ref{eq:z_eq_1}) are met, then
    the following properties hold:
    \begin{enumerate}
        \item The \hl{continuous} variable $z^{\phi}$ is 1 only if $\y \vDash \phi$;
        \item The decision variable $\rho$ provides a lower bound for the robustness measure $\rho^{\varphi}(\y)$;
        \item The constraints (\ref{eq:bigM})-(\ref{eq:z_eq_1}) are feasible if and only if $\y \vDash \varphi$.
    \end{enumerate}
\end{proposition}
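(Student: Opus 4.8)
The plan is to derive all three claims from a single structural induction on the STL tree $\mathcal{T}^\varphi$. I would identify each node with its subformula $\phi$ and its associated timestep $t^\phi$, so that the temporal operators $\always$, $\eventually$, $\until$ appear as their unrolled $\land$/$\lor$ nodes over the relevant timesteps (as in Figure~\ref{fig:simple_tree}), and abuse notation by writing $z^\phi$ for the variable attached to each \emph{node}. The key lemma is: \emph{if constraints (\ref{eq:bigM})--(\ref{eq:z_eq_1}) hold and $z^\phi = 1$, then $\rho \le \rho^\phi((\y,t^\phi))$.} I would prove it by induction on node height. Base case ($\phi = \pi$ a leaf): with $z^\pi = 1$, (\ref{eq:bigM}) reduces to $\rho \le -g^\pi(\y_t) = \rho^\pi((\y,t))$. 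Inductive step, $\land$-node: $z^\phi = 1$ together with (\ref{eq:conjunction}) and $z^{\phi_i}\le 1$ forces $z^{\phi_i} = 1$ for every child, so by the hypothesis $\rho \le \rho^{\phi_i}((\y,t^{\phi_i}))$ for each $i$, hence $\rho \le \min_i \rho^{\phi_i}((\y,t^{\phi_i})) = \rho^\phi((\y,t^\phi))$ by the semantics of $\land$ (resp. $\always$). Inductive step, $\lor$-node: $z^\phi = 1$ makes $1-z^\phi = 0$, so the SOS1 condition (\ref{eq:disjunction}) (via Theorem~\ref{theorem:sos1}) forces exactly one child $z^{\phi_j} = 1$ and the rest $0$, whence $\rho \le \rho^{\phi_j}((\y,t^{\phi_j})) \le \max_i \rho^{\phi_i}((\y,t^{\phi_i})) = \rho^\phi((\y,t^\phi))$ by the semantics of $\lor$ (resp. $\eventually$, $\until$ --- for $\until$ each disjunct is an $\land$-node already handled). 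Property 1 is then immediate, since $z^\phi = 1$ gives $\rho^\phi((\y,t^\phi)) \ge \rho \ge 0$, i.e. $(\y,t^\phi)\vDash\phi$; and Property 2 is the lemma applied at the root, where (\ref{eq:z_eq_1}) gives $z^\varphi = 1$ and $t^\varphi = 0$.

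For Property 3, the forward direction is immediate from Properties 1--2: a feasible assignment provides $z^\varphi = 1$ and $\rho \ge 0$, so $\rho^\varphi(\y) \ge \rho \ge 0$. For the converse, given a signal with $\rho^\varphi(\y) \ge 0$ I would exhibit a feasible point: set $\rho = 0$ and $z^\varphi = 1$, then propagate downward --- at an $\land$-node with $z^\phi = 1$ set every $z^{\phi_i} = 1$; at a $\lor$-node with $z^\phi = 1$ set $z^{\phi_j} = 1$ for some $j \in \argmax_i \rho^{\phi_i}((\y,t^{\phi_i}))$ and the remaining children to $0$; at any node with $z^\phi = 0$ set all children to $0$. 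Constraints (\ref{eq:conjunction}), (\ref{eq:z_eq_1}), and the SOS1 constraint (\ref{eq:disjunction}) (whose vector is $[1,0,\dots,0]$ when $z^\phi = 0$) then hold by inspection. A short sub-induction shows that every node with $z^\phi = 1$ satisfies $\rho^\phi((\y,t^\phi)) \ge 0$ --- true at the root by assumption, and preserved downward by the $\min$/$\max$ semantics and the $\argmax$ choice --- so every active leaf has $-g^\pi(\y_t) \ge 0$ and (\ref{eq:bigM}) holds with $\rho = 0$; for an inactive leaf ($z^\pi = 0$), (\ref{eq:bigM}) holds provided $M$ exceeds $\sup g^\pi$ over the feasible region, the usual big-$M$ caveat.

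The routine parts are the case analyses in the two inductions. The step I expect to need the most care is matching the unrolled-tree view of the temporal operators against the quantitative semantics --- in particular the $\until$ operator, which becomes a disjunction of conjunctions --- and stating precisely that ``$z^\phi = 1$ only if $\y \vDash \phi$'' refers to the shifted signal $(\y, t^\phi)$ at an interior node. The only hypothesis beyond the listed constraints is that $M$ is taken large enough, which I would make explicit at the outset.
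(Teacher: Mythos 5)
Your proof is correct and follows essentially the same route as the paper, which simply defers Property~1 to Theorem~\ref{theorem:sos1} and ``the reasoning of \cite{raman2014model}'' --- the structural induction on the STL tree that you carry out explicitly is exactly that reasoning, and your handling of the $\land$, $\lor$, and big-$M$ cases matches the intended argument. The one place you go beyond the paper is the ``if'' direction of Property~3: the paper dismisses it as following ``from Property~2 and $\rho \geq 0$,'' which does not by itself exhibit a feasible point, whereas your explicit downward propagation of the $z^\phi$ assignment (with the $\argmax$ choice at $\lor$-nodes and the standard large-$M$ caveat) is the careful version of what that one-liner must mean.
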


\begin{proof}
    \hl{Property~1 follows naturally from Theorem \ref{theorem:sos1} and the reasoning of \cite{raman2014model}. Note that this property is one-directional: it may be the case that $z^\phi = 0$ and $\y \vDash \phi$ for some subformula.}
    
    \hl{
    Property~2 follows from the definition of the robustness measure and constraints (\ref{eq:bigM}). }
    
    \hl{
    For Property~3, the reverse direction (only if) follows directly from Property~1 and constraint (\ref{eq:z_eq_1}). The forward direction (if) follows from Property~2 and the constraint $\rho \geq 0$.}
\end{proof}

This allows us to solve Problem~\ref{prob:main} as an MICP:

\begin{subequations}\label{eq:our_micp}
\begin{align}
    \min_{\x,\u,\y,z^{\phi_i},\rho} ~& -\rho + \sum_{t=0}^T \x_t^T\Q \x_t + \u_t^T\R\u_t \\
    \text{s.t. } & \text{Dynamics constraints (\ref{eq:state_dynamics}-\ref{eq:set_containment})} \\
                 & \text{STL Constraints (\ref{eq:bigM}-\ref{eq:z_eq_1})} \\
                 & \rho \geq 0
\end{align}
\end{subequations}

Similar to the existing MICP encodings \cite{raman2014model,sadraddini2015robust,sadraddini2018formal,belta2019formal}, this MICP is sound, complete, and globally optimal:

\begin{theorem}[Sound and Complete]
    A dynamically feasible output signal $\y$ is a solution to Problem~\ref{prob:main} if and only if $\y$ is a solution to (\ref{eq:our_micp}).
\end{theorem}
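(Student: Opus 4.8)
The plan is to reduce the claimed equivalence to a direct comparison of the two optimization problems~(\ref{eq:main_optimization}) and~(\ref{eq:our_micp}). Both share the dynamics and set-containment constraints (\ref{eq:state_dynamics})--(\ref{eq:set_containment}) and the quadratic running cost; the only differences are that Problem~\ref{prob:main} carries the robustness term $-\rho^{\varphi}(\y)$ and the single constraint $\rho^{\varphi}(\y) \geq 0$, whereas (\ref{eq:our_micp}) carries $-\rho$, the STL constraints (\ref{eq:bigM})--(\ref{eq:z_eq_1}) on the auxiliary variables $z^{\phi}$, and $\rho \geq 0$. Hence it suffices to establish two facts about an arbitrary dynamically feasible $\y$: (i) there exist $z^{\phi}$ and $\rho \geq 0$ satisfying (\ref{eq:bigM})--(\ref{eq:z_eq_1}) if and only if $\rho^{\varphi}((\y,0)) \geq 0$; and (ii) whenever such variables exist, $\max\{\rho : (\ref{eq:bigM})\text{--}(\ref{eq:z_eq_1}) \text{ hold}, \ \rho \geq 0\} = \rho^{\varphi}((\y,0))$. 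Granting (i) and (ii), the feasible output signals of the two problems coincide and, for each feasible $\y$, the best attainable objective value is $-\rho^{\varphi}(\y) + \sum_{t=0}^{T}\x_t^T\Q\x_t + \u_t^T\R\u_t$ in both problems; therefore their sets of minimizers coincide, which is precisely the asserted ``if and only if''.

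Fact~(i) follows directly from Property~3 of the Proposition together with the semantic definition $\y \vDash \varphi \iff \rho^{\varphi}((\y,0)) \geq 0$. Fact~(ii) splits into an upper bound and an attainment argument. The upper bound is Property~2 of the Proposition: any feasible $\rho$ satisfies $\rho \leq \rho^{\varphi}(\y)$. For attainment I would construct, for a fixed dynamically feasible $\y$ with $\rho^{\varphi}((\y,0)) \geq 0$, an explicit certificate $\{z^{\phi}\}$ by recursing down $\mathcal{T}^{\varphi}$: set $z^{\varphi} = 1$ at the root; at an $\land$-type node with $z^{\phi}$ already assigned, set $z^{\phi_i} = z^{\phi}$ for every child; at a $\lor$-type node (including the disjunctive tree expansions of $\eventually$ and $\until$) with $z^{\phi} = 1$, pick an index $i^\star$ attaining the relevant $\max$ in the quantitative semantics, set $z^{\phi_{i^\star}} = 1$ and $z^{\phi_i} = 0$ for $i \neq i^\star$; at a $\lor$-type node with $z^{\phi} = 0$, set all $z^{\phi_i} = 0$; and finally set $z^{\pi} = 1$ exactly for the leaves reached with value $1$ by this propagation. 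By construction this assignment satisfies (\ref{eq:conjunction}), makes each vector in (\ref{eq:disjunction}) have exactly one unit entry---so it is a genuine SOS1 vector, encodable via Theorem~\ref{theorem:sos1}---and satisfies (\ref{eq:z_eq_1}).

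The core of the attainment argument is an induction over the tree showing that the ``active'' leaves, those with $z^{\pi} = 1$, satisfy $\min_{\text{active }\pi}\bigl(-g^{\pi}(\y_{t_\pi})\bigr) = \rho^{\varphi}((\y,0))$: at a leaf this reads $-g^{\pi}(\y_t) = \rho^{\pi}((\y,t))$; at an $\land$- or $\always$-node the active leaves are the union over children and $\min$ distributes to give $\min_i \rho^{\varphi_i} = \rho^{\phi}$; at a $\lor$-, $\eventually$-, or $\until$-node only the $i^\star$-subtree is active and the inductive hypothesis gives its value as $\max_i \rho^{\varphi_i} = \rho^{\phi}$. With this identity, constraint (\ref{eq:bigM}) reduces to $\rho \leq -g^{\pi}(\y_{t_\pi})$ on active leaves and is vacuous on inactive ones for $M$ sufficiently large (the standard big-M assumption), so $\rho = \rho^{\varphi}((\y,0)) \geq 0$ is feasible; together with the upper bound this yields~(ii). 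I expect the main obstacle to be precisely this induction: one must carefully verify that the tree expansions of $\always$, $\eventually$, and $\until$ really are $\land$- and $\lor$-nodes over time-shifted subtrees (the nested $\min$/$\min$ inside $\until$ being the most delicate), and that selecting the argmax branch at every disjunction produces an all-conjunction subtree whose robustness equals $\rho^{\varphi}(\y)$ exactly, not merely a lower bound.
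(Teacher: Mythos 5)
Your proof is correct, but it is worth noting that the paper offers no explicit proof of this theorem at all: it is stated immediately after the Proposition and is implicitly meant to follow from it, with ``solution'' read in the feasibility sense (soundness: any $\y$ feasible for (\ref{eq:our_micp}) satisfies $\rho^{\varphi}(\y)\geq 0$; completeness: any satisfying $\y$ admits feasible $z^{\phi},\rho$), in which case the theorem is exactly Property~3 of the Proposition combined with the shared dynamics constraints. You instead read ``solution'' as \emph{optimal} solution and therefore prove something strictly stronger: not only do the feasible sets of output signals coincide, but the partially minimized objectives agree, because $\max\{\rho\}$ over the auxiliary variables equals $\rho^{\varphi}(\y)$ exactly. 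The extra ingredient this requires --- the attainment half of your Fact~(ii), constructed by propagating $z$-values down $\mathcal{T}^{\varphi}$ and selecting the argmax branch at every disjunctive node so that the active leaves realize $\min_{\text{active }\pi}(-g^{\pi}(\y_{t_\pi}))=\rho^{\varphi}((\y,0))$ --- appears nowhere in the paper (Property~2 gives only the upper bound $\rho\leq\rho^{\varphi}(\y)$), yet it is precisely what is needed to justify the paper's subsequent ``Globally Optimal'' theorem and the claim that setting $\Q=\R=0$ yields a maximally robust trajectory. Your construction is sound: the SOS1 vectors in (\ref{eq:disjunction}) get exactly one unit entry, the conjunction constraints (\ref{eq:conjunction}) and the root constraint (\ref{eq:z_eq_1}) hold by construction, and the min/max induction over the tree (including the max-of-mins expansion of $\until$) is the standard argument. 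The only caveats are the ones you already flag --- $M$ must be taken large enough that inactive leaves impose no binding constraint, and the temporal operators must genuinely unfold into $\land$/$\lor$ nodes over time-shifted subtrees --- both of which are standing assumptions of the paper's construction. In short: a correct proof of a stronger statement than the paper argues for, obtained by supplementing the Proposition with an explicit tightness certificate.
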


\begin{theorem}[Globally Optimal]
    If Problem~\ref{prob:main} is feasible, the MICP (\ref{eq:our_micp}) finds a globally optimal solution. 
\end{theorem}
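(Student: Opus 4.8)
The plan is to establish the ``Globally Optimal'' theorem by reducing it to the preceding ``Sound and Complete'' theorem together with the convexity structure already identified in Problem~\ref{prob:main}. First I would argue that (\ref{eq:our_micp}) is feasible whenever Problem~\ref{prob:main} is feasible: if there exists a dynamically feasible trajectory $(\x,\u,\y)$ with $\rho^\varphi(\y) \geq 0$, then by the reverse direction of the Sound and Complete theorem (equivalently, Property~3 of the Proposition) the STL constraints (\ref{eq:bigM})--(\ref{eq:z_eq_1}) admit a completion by suitable values of the $z^\phi$ variables and $\rho$, so the feasible set of (\ref{eq:our_micp}) is nonempty. Since the feasible set is nonempty and the objective is bounded below on it (the quadratic terms are nonnegative because $\Q,\R \succeq 0$, and $\rho$ is bounded above on the feasible set because $-\rho \leq g^\pi(\y_t)$ for any enforced leaf and, more robustly, $\rho$ is bounded by the big-$M$ constraints combined with $\y_t \in Y$ when $Y$ is bounded — or one simply notes $\rho^\varphi(\y)$ is finite), a minimizer exists or an infimizing sequence converges; I would invoke standard results on mixed-integer convex programs, namely that a mixed-integer convex program with finitely many integer variables ranging over a bounded set decomposes into finitely many convex subproblems, each of which attains its optimum on a nonempty closed feasible region, hence the overall infimum is attained \cite{belta2019formal,conforti2014integer}.

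Second, I would show the optimal value of (\ref{eq:our_micp}) coincides with that of Problem~\ref{prob:main}. The key link is that, by Property~2 of the Proposition, every feasible point of (\ref{eq:our_micp}) has $\rho \leq \rho^\varphi(\y)$, so $-\rho + \sum_t (\x_t^T\Q\x_t + \u_t^T\R\u_t) \geq -\rho^\varphi(\y) + \sum_t(\x_t^T\Q\x_t+\u_t^T\R\u_t)$, giving one inequality between the optimal values. For the reverse inequality I would take an optimal $(\x^\star,\u^\star,\y^\star)$ of Problem~\ref{prob:main} and exhibit a feasible point of (\ref{eq:our_micp}) that achieves the same cost: assign the $z^\phi$ according to the recursive satisfaction pattern of $\y^\star$ (as in the proof of Property~1, following \cite{raman2014model}), and crucially set $\rho$ equal to $\rho^\varphi(\y^\star)$ rather than merely to a slack value — one must check that this choice is consistent with all the big-$M$ leaf constraints (\ref{eq:bigM}), which it is since for every leaf with $z^\pi = 1$ the definition of the robustness measure as nested $\min$/$\max$ guarantees $\rho^\varphi(\y^\star) \leq -g^\pi(\y^\star_t)$ along the enforced branches. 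This yields equality of optimal values and shows the argmin sets coincide, which is exactly the Sound and Complete statement strengthened to optimality.

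The main obstacle I anticipate is the verification that $\rho$ can be ``pushed all the way up'' to the true robustness value $\rho^\varphi(\y^\star)$ while respecting (\ref{eq:conjunction})--(\ref{eq:disjunction}) and the leaf constraints simultaneously — i.e., that the chosen assignment of the $z^\phi$ variables picks out a branch of the $\min$/$\max$ tree along which all the active leaf bounds are no tighter than $\rho^\varphi(\y^\star)$. This is essentially an induction on the tree structure: at an $\land$-node the robustness is the min over children so every enforced child has robustness $\geq$ the parent's, and at a $\lor$-node (or eventually/until node) one enforces exactly the child attaining the max, whose robustness equals the parent's; propagating this down to the leaves shows every enforced leaf satisfies $-g^\pi(\y^\star_t) \geq \rho^\varphi(\y^\star)$, so (\ref{eq:bigM}) permits $\rho = \rho^\varphi(\y^\star)$. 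Once this is in hand, global optimality follows because the MICP, being a finite union of convex programs over the bounded integer lattice of the $\zeta_k$ variables, is solved to global optimality by branch-and-bound; I would cite \cite{conforti2014integer} for this standard fact and note that the continuous relaxation within each leaf of the branch-and-bound tree is a convex (indeed quadratic) program solvable to global optimality \cite{boyd2004convex}.
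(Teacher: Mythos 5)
The paper states this theorem without any explicit proof --- it is left as an immediate consequence of the preceding Proposition together with the standard fact that branch-and-bound solves an MICP to global optimality. Your argument is therefore not contradicting anything in the paper; rather, it supplies the details the paper omits, and it does so correctly. The two genuinely substantive points you add are (i) the observation that matching the \emph{optimal values} (not just the feasible sets) of Problem~\ref{prob:main} and (\ref{eq:our_micp}) requires showing that $\rho$ can be set equal to the true robustness $\rho^\varphi(\y^\star)$ at an optimizer, and (ii) the tree induction establishing that this choice is consistent with the leaf constraints (\ref{eq:bigM}): along the enforced branches, robustness is non-decreasing from root to leaf ($\land$-nodes take a min over children, so every child dominates the parent; $\lor$-type nodes can enforce the argmax child, whose robustness equals the parent's), hence every enforced leaf satisfies $-g^\pi(\y^\star_t)\geq\rho^\varphi(\y^\star)$. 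Combined with Property~2 of the Proposition (which gives $\rho\leq\rho^\varphi(\y)$ on the MICP feasible set), this yields equality of optimal values, which is exactly what the paper needs but does not spell out. Your appeal to \cite{conforti2014integer} for attainment over the finite integer lattice and to convexity of each relaxation \cite{boyd2004convex} is appropriate; the only soft spot is the boundedness of $\rho$ (and hence attainment of the infimum) when $Y$ is unbounded, but that caveat applies equally to Problem~\ref{prob:main} itself and is glossed over by the paper as well, so it is not a gap relative to the paper's own standard of rigor.
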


The advantage of our approach over other sound, complete, and globally optimal methods \cite{raman2014model,sadraddini2015robust,sadraddini2018formal,belta2019formal} is that our approach typically uses fewer binary variables. For many specifications, like that shown in Figure~\ref{fig:simple_tree} and the examples in Section~\ref{sec:experimenal}, this difference is quite significant. 

It is possible, however, for our approach to have more binary variables than the standard encoding, particularly for specifications with few predicates but many nested disjunctions. To this end, the next section shows how the STL Tree data structure can be used to further reduce the MICP size. 

\subsection{Formula Flattening}

Unlike the state-of-the-art MICP encoding with binary variables for every predicate at every timestep \cite{belta2019formal}, the number of binary variables in our proposed encoding depends heavily on the structure of the STL Tree $\mathcal{T}^\varphi$. Since logically equivalent formulas can have different tree structures, this can have a major impact on solver efficiency.

For example, consider the logically equivalent formulas $\varphi_1 = a \lor b \lor c$ and $\varphi_2 = a \lor (b \lor c)$, where $a$, $b$, and $c$ are predicates. STL Trees are illustrated in Figure~\ref{fig:equivalent_trees}. $\varphi_1$ includes one disjunctive subformula, and can be encoded with 2 binary variables. The logically equivalent $\varphi_2$ includes two disjunctive subformulas, however, and requires 4 binary variables. 

\begin{figure}
    \centering
    \begin{subfigure}[b]{0.48\linewidth}
        \centering
        \begin{tikzpicture}
            \tikzset{sibling distance=20pt}
            \Tree [.$\varphi_1$ $a$ $b$ $c$ ]
        \end{tikzpicture} 
        \caption{$\varphi_1 = a \lor b \lor c $}
        \label{fig:phi_1}
    \end{subfigure}
    \begin{subfigure}[b]{0.48\linewidth}
        \centering
        \begin{tikzpicture}
            \tikzset{level distance=20pt, sibling distance=20pt}
            \Tree [.$\varphi_2$
                    [.$a$ ]
                    [.{$b\lor c$} $b$ $c$ ]
                  ]
        \end{tikzpicture} 
        \caption{$\varphi_2 = a \lor (b \lor c)$}
        \label{fig:phi_2}
    \end{subfigure}
    \caption{STL Trees for two logically equivalent formulas. Our approach introduces 2 binary variables for $\varphi_1$, but 4 variables for $\varphi_2$.}
    \label{fig:equivalent_trees}
\end{figure}
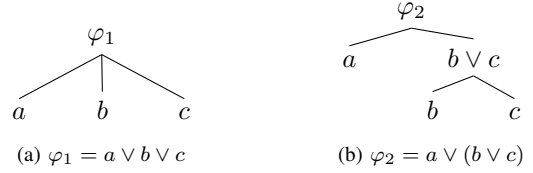

\begin{figure*}
    \centering
    \begin{subfigure}{0.22\linewidth}
        \centering
        \includegraphics[width=1.0\linewidth]{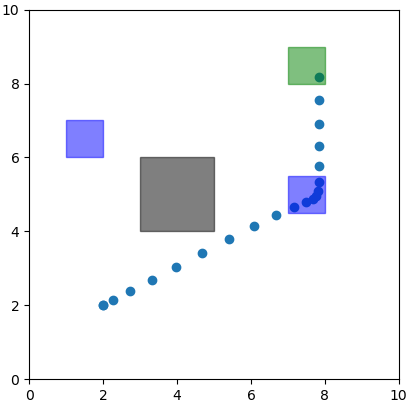}
        \caption{\hl{Two-Target Reach-Avoid (\ref{eq:either_or})}}
        \label{fig:either_or}
    \end{subfigure}
    \begin{subfigure}{0.22\linewidth}
        \centering
        \includegraphics[width=1.0\linewidth]{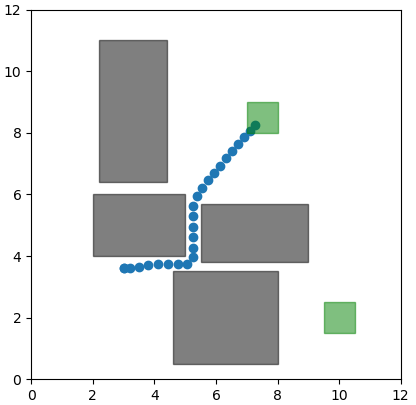}
        \caption{Narrow Passage (\ref{eq:narrow_passage})}
        \label{fig:narrow_passage}
    \end{subfigure}
    \begin{subfigure}{0.22\linewidth}
        \centering
        \includegraphics[width=1.0\linewidth]{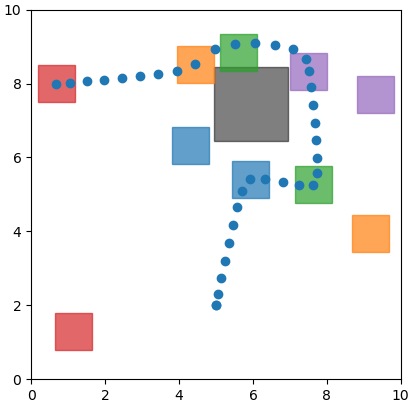}
        \caption{Many-Target (\ref{eq:multitarget})}
        \label{fig:multitarget}
    \end{subfigure}
    \begin{subfigure}{0.32\linewidth}
        \centering
        \includegraphics[width=1.0\linewidth]{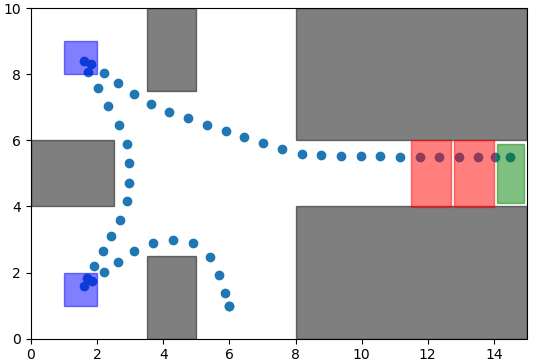}
        \caption{Door Puzzle (\ref{eq:door_puzzle})}
        \label{fig:door_puzzle}
    \end{subfigure}
    \caption{Illustrations of example scenarios along with solutions generated by our proposed approach. Solve times are shown in Table~\ref{tab:solve_times}. }
    \label{fig:scenarios}
    \vspace{-1em}
\end{figure*}

Clearly, we should prefer ``flatter'' STL Trees for our proposed encoding. To this end, \hl{specification flattening} provides an automatic means of compressing formulas with many layers into logically equivalent formulas with fewer layers. The basic idea is to search for nodes which have the same combination type ($\land$ or $\lor$) as their parent. If this is the case, that node's children can be moved up a level and the node removed. 


While formula flattening has the most dramatic impact on our proposed approach, it can also benefit other MICP encodings \cite{raman2014model,belta2019formal}, since flatter formulas introduce fewer continuous variables and constraints.
Interestingly, while formula flattening consistently reduces the number of binary variables, it does not always improve solve times. We attribute this counter-intuitive result to the fact that certain solver-specific heuristics may be more effective on unflattened formulas. In Section~\ref{sec:experimenal}, we only flatten formulas if it improves performance.

\section{Simulation Experiments}\label{sec:experimenal}

In this section, we demonstrate the improved scalability of our proposed approach. This is largely due to a reduction in the number of binary variables --- especially for specifications involving $\eventually$ and $\until$ over long time-horizons. 

We consider several benchmark specifications involving a simple mobile robot exploring a planar environment. These specifications are illustrated in Figure \ref{fig:scenarios}. In all scenarios, we assume that the robot is governed by double-integrator dynamics in both the horizontal and vertical directions, i.e.,
\begin{equation*}
    \x = [p_x, p_y, \dot{p}_x, \dot{p}_y]^T \quad \u = [\ddot{p}_x,\ddot{p}_y]^T
\end{equation*}
where $p_x$ is the horizontal position of the robot and $p_y$ is the vertical position, in meters. The output signal is defined as $\y = [p_x, p_y]^T$, giving us dynamics
\begin{equation*}
    \A = \begin{bmatrix}\bm{I} & \bm{I} \\ \bm{0}  & \bm{I} \end{bmatrix}, \quad \B = \begin{bmatrix} \bm{0} \\ \bm{I}\end{bmatrix}, \quad \C = \begin{bmatrix} \bm{I} & \bm{0}\end{bmatrix}, \quad \bm{D} = \bm{0}.
\end{equation*}
State and input constraints ${X}$ and ${U}$ are defined to enforce maximum acceleration and velocity of 0.5 $m/s^2$ and 1 $m/s$:
\begin{gather}
    {X} = \{ \x \mid 0 \leq p_x \leq 15, 0 \leq p_y \leq 15, |\dot{p}_x| \leq 1, |\dot{p}_y| \leq 1 \}. \\
    {U} = \{ \u \mid |\ddot{p}_x| \leq 0.5, |\ddot{p}_y| \leq 0.5 \}.
\end{gather}

We used python and the Drake \cite{drake} mathematical programming interface to set up the MICP (\ref{eq:our_micp}), and solved the MICP with Gurobi \hl{version 9.5.1} \cite{gurobi} with default options. As a baseline, we compare to the state-of-the-art MICP described in \cite{belta2019formal}, similarly implemented in Drake/Gurobi. All experiments were performed on a laptop with an Intel i7 processor and 32GB RAM. \hl{Open-source software is available online \cite{stlpy}}.

In the first scenario, shown in Figure~\ref{fig:either_or}, our mobile robot must avoid an obstacle ($O$, grey), \hl{visit one of two intermediate targets ($T_1$, $T_2$, blue) for at least 5 timesteps}, and reach a goal ($G$, green). This specification can be written as 
\begin{equation}\label{eq:either_or}
    \hl{\eventually_{[0,T-5]} ( \always_{[0,5]} T_1 \lor  \always_{[0,5]} T_2) \land \always_{[0,T]} \lnot O \land \eventually_{[0,T]}G,}
\end{equation}
where $T$ is the specification time-bound. 
 
The second scenario involves many obstacles and narrow passages, and is shown in Figure~\ref{fig:narrow_passage}. The robot must eventually reach one of two possible goals in addition to avoiding obstacles. This specification can be written as:
\begin{equation}\label{eq:narrow_passage}
    \eventually_{[0,T]} (G_1 \lor G_2) \land \always_{[0,T]} \left( \bigwedge_{i=1}^4 \lnot O_i \right)
\end{equation}
 
The third scenario, shown in Figure~\ref{fig:multitarget}, is designed to contain a more complex logical structure. In addition to avoiding an obstacle, the robot must visit several groups of targets $T^j$ (red, green, blue, orange, and purple). There are two targets in each group, and at least one target from each group must be visited. This specification can be written as
\begin{equation}\label{eq:multitarget}
    \bigwedge_{i=1}^{5} \left( \bigvee_{j=1}^{2} \eventually_{[0,T]} T_{i}^{j} \right) \land \always_{[0,T]} (\lnot O),
\end{equation}
where $T_i^j$ denotes the $j^{th}$ target in group $i$.

The final scenario is inspired by \cite{sun2022multi,vega2018admissible}, and requires the robot to collect keys (i.e., visit blue regions) associated with certain doors (red) before reaching an end goal (green). This specification can be written as
\begin{equation}\label{eq:door_puzzle}
    \bigwedge_{i=1}^{2} \left( \lnot D_i \until_{[0,T]} K_i \right)  \land \eventually_{[0,T]} G \land
                \always_{[0,T]} (\bigwedge_{i=1}^5 \lnot O_i),
\end{equation}
where $K_i$ denotes picking up the $i^{th}$ key and $D_i$ denotes passing through the $i^{th}$ door region.

Solve times and number of binary variables for each of these specifications over several time horizons $T$ are shown in Table~\ref{tab:solve_times}. For all of these examples, our proposed approach uses fewer binary variables than the standard MICP \cite{belta2019formal}. The standard MICP is often faster for shorter time horizons, but our proposed encoding is consistently faster for long time horizons. 

\begin{figure}
    \centering
    \includegraphics[width=0.6\linewidth]{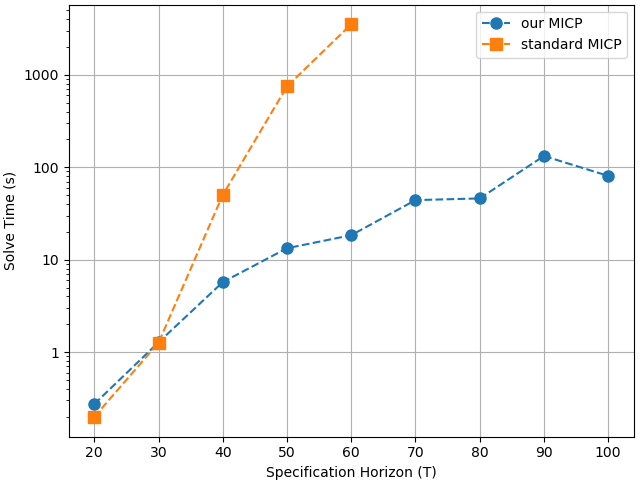}
    \caption{Solve times for specification (\ref{eq:either_or}) over different time horizons.}
    \label{fig:time_scalability}
\end{figure}

For specification (\ref{eq:either_or}), we provide a graph of solve times versus time horizon in Figure~\ref{fig:time_scalability}. Again, we see that our proposed approach outperforms the state-of-the-art for long time horizons, with the performance gap between the two approaches growing as $T$ increases. This makes sense given how each approach introduces binary variables to handle the $\eventually$ subformulas. Under the standard approach, the number of binary variables used to encode $\eventually_{[0,T]}$ increases linearly with $T$. Under our proposed approach, the number of binary variables increases only logarithmically. 

\begin{table}[]
    \centering
    \hl{
    \begin{tabular}{c|c||c|c|c|c}
        Specification & Horizon (T) & 
        \multicolumn{2}{c|}{ Binary Vars. } & 
        \multicolumn{2}{c}{ Solve Time (s) } \\
        \hline
        & & \cite{belta2019formal} & Ours & \cite{belta2019formal} & Ours \\
         \hline
         \hline
         \hl{Two-Target} (\ref{eq:either_or}) & 25 & 1216 &  89 & 0.662 & \textbf{0.579} \\
                                              & 50 & 2616 & 166 &   749 & \textbf{13.3} \\
         \hline
         \hl{Narrow Passage}       & 25 &  624 & 318 & \textbf{2.27} & 5.92 \\
         (\ref{eq:narrow_passage}) & 50 & 1124 & 619 &         168.8 & \textbf{158.1} \\
         \hline
         \hl{Many-Target} (\ref{eq:multitarget}) & 25 & 1144 & 441 & \textbf{3.36} & 16.8  \\
                                                 & 50 & 2244 & 846 &       $>$7500 & \textbf{2666} \\
         \hline
         \hl{Door Puzzle} (\ref{eq:door_puzzle}) & 25 &  3432 & 2355 & \textbf{8.46} & 8.75 \\
                                                 & 50 & 11832 & 8433 &       $>$7500 & \textbf{1392} \\
    \end{tabular}}
    
    \caption{Solve Times for Benchmark Scenarios}
    \label{tab:solve_times}
\end{table}

Finally, we note that the standard MICP's superior performance over short time horizons may be due to Gurobi's presolve routines taking advantage of the extra binary variables to perform additional simplifications. Take, for example, the narrow passage specification (\ref{eq:narrow_passage}) with \hl{25 timesteps. With presolve enabled, the standard approach outperforms our approach (Table~\ref{tab:solve_times}). With presolve disabled, however, the standard approach takes over an hour to find an optimal solution, while our approach takes about a minute. }

\section{Conclusion}\label{sec:conclusion}

We proposed a more efficient mixed-integer encoding for Signal Temporal Logic. Our primary insight is that we can encode disjunctions using a logarithmic number of binary variables, and conjunctions without introducing any binary variables. We also introduced a tree data structure for STL, and showed how this data structure can be exploited to further reduce the number of binary variables and constraints. We demonstrated the improved scalability of our approach in simulation examples, where our proposed encoding outperformed the state-of-the-art by orders of magnitude on long and complex specifications.

\bibliographystyle{IEEEtran}
\bibliography{references}

\begin{thebibliography}{10}
\providecommand{\url}[1]{#1}
\csname url@samestyle\endcsname
\providecommand{\newblock}{\relax}
\providecommand{\bibinfo}[2]{#2}
\providecommand{\BIBentrySTDinterwordspacing}{\spaceskip=0pt\relax}
\providecommand{\BIBentryALTinterwordstretchfactor}{4}
\providecommand{\BIBentryALTinterwordspacing}{\spaceskip=\fontdimen2\font plus
\BIBentryALTinterwordstretchfactor\fontdimen3\font minus
  \fontdimen4\font\relax}
\providecommand{\BIBforeignlanguage}[2]{{%
\expandafter\ifx\csname l@#1\endcsname\relax
\typeout{** WARNING: IEEEtran.bst: No hyphenation pattern has been}%
\typeout{** loaded for the language `#1'. Using the pattern for}%
\typeout{** the default language instead.}%
\else
\language=\csname l@#1\endcsname
\fi
#2}}
\providecommand{\BIBdecl}{\relax}
\BIBdecl

\bibitem{lindemann2019coupled}
L.~Lindemann, J.~Nowak, L.~Sch{\"o}nb{\"a}chler, M.~Guo, J.~Tumova, and D.~V.
  Dimarogonas, ``Coupled multi-robot systems under linear temporal logic and
  signal temporal logic tasks,'' \emph{IEEE Transactions on Control Systems
  Technology}, vol.~29, no.~2, pp. 858--865, 2019.

\bibitem{gilpin2021smooth}
Y.~Gilpin, V.~Kurtz, and H.~Lin, ``A smooth robustness measure of signal
  temporal logic for symbolic control,'' \emph{IEEE Control Systems Letters},
  vol.~5, no.~1, pp. 241--246, 2021.

\bibitem{mehdipour2019arithmetic}
N.~Mehdipour, C.-I. Vasile, and C.~Belta, ``Arithmetic-geometric mean
  robustness for control from signal temporal logic specifications,'' in
  \emph{2019 American Control Conference (ACC)}.\hskip 1em plus 0.5em minus
  0.4em\relax IEEE, 2019, pp. 1690--1695.

\bibitem{pant2018fly}
Y.~V. Pant, H.~Abbas, R.~A. Quaye, and R.~Mangharam, ``Fly-by-logic: control of
  multi-drone fleets with temporal logic objectives,'' in \emph{2018 ACM/IEEE
  9th International Conference on Cyber-Physical Systems (ICCPS)}.\hskip 1em
  plus 0.5em minus 0.4em\relax IEEE, 2018, pp. 186--197.

\bibitem{kurtz2020trajectory}
V.~Kurtz and H.~Lin, ``Trajectory optimization for high-dimensional nonlinear
  systems under stl specifications,'' \emph{IEEE Control Systems Letters},
  vol.~5, no.~4, pp. 1429--1434, 2020.

\bibitem{sadraddini2016model}
S.~Sadraddini and C.~Belta, ``Model predictive control of urban traffic
  networks with temporal logic constraints,'' in \emph{2016 American Control
  Conference (ACC)}.\hskip 1em plus 0.5em minus 0.4em\relax IEEE, 2016, pp.
  881--881.

\bibitem{coogan2015traffic}
S.~Coogan, E.~A. Gol, M.~Arcak, and C.~Belta, ``Traffic network control from
  temporal logic specifications,'' \emph{IEEE Transactions on Control of
  Network Systems}, vol.~3, no.~2, pp. 162--172, 2015.

\bibitem{belta2019formal}
C.~Belta and S.~Sadraddini, ``Formal methods for control synthesis: An
  optimization perspective,'' \emph{Annual Review of Control, Robotics, and
  Autonomous Systems}, vol.~2, pp. 115--140, 2019.

\bibitem{raman2014model}
V.~Raman, A.~Donz{\'e}, M.~Maasoumy, R.~M. Murray, A.~Sangiovanni-Vincentelli,
  and S.~A. Seshia, ``Model predictive control with signal temporal logic
  specifications,'' in \emph{53rd IEEE Conference on Decision and
  Control}.\hskip 1em plus 0.5em minus 0.4em\relax IEEE, 2014, pp. 81--87.

\bibitem{sadraddini2015robust}
S.~Sadraddini and C.~Belta, ``{Robust temporal logic model predictive
  control},'' in \emph{2015 53rd Annual Allerton Conference on Communication,
  Control, and Computing (Allerton)}, sep 2015, pp. 772--779.

\bibitem{sadraddini2018formal}
------, ``Formal synthesis of control strategies for positive monotone
  systems,'' \emph{IEEE Transactions on Automatic Control}, vol.~64, no.~2, pp.
  480--495, 2018.

\bibitem{sun2022multi}
D.~Sun, J.~Chen, S.~Mitra, and C.~Fan, ``Multi-agent motion planning from
  signal temporal logic specifications,'' \emph{arXiv preprint
  arXiv:2201.05247}, 2022.

\bibitem{conforti2014integer}
M.~Conforti, G.~Cornu{\'e}jols, G.~Zambelli \emph{et~al.}, \emph{Integer
  programming}.\hskip 1em plus 0.5em minus 0.4em\relax Springer, 2014, vol.
  271.

\bibitem{kurtz2021more}
V.~Kurtz and H.~Lin, ``A more scalable mixed-integer encoding for metric
  temporal logic,'' \emph{IEEE Control Systems Letters}, 2021.

\bibitem{marcucci2021shortest}
T.~Marcucci, J.~Umenberger, P.~A. Parrilo, and R.~Tedrake, ``Shortest paths in
  graphs of convex sets,'' \emph{arXiv preprint arXiv:2101.11565}, 2021.

\bibitem{pant2017smooth}
Y.~V. Pant, H.~Abbas, and R.~Mangharam, ``Smooth operator: Control using the
  smooth robustness of temporal logic,'' in \emph{Conference on Control
  Technology and Applications}.\hskip 1em plus 0.5em minus 0.4em\relax IEEE,
  2017, pp. 1235--1240.

\bibitem{leung2020back}
K.~Leung, N.~Ar{\'e}chiga, and M.~Pavone, ``Back-propagation through signal
  temporal logic specifications: Infusing logical structure into gradient-based
  methods,'' in \emph{14th International Workshop on the Algorithmic
  Foundations of Robotics}, 2020.

\bibitem{lindemann2018control}
L.~Lindemann and D.~V. Dimarogonas, ``Control barrier functions for signal
  temporal logic tasks,'' \emph{IEEE Control Systems Letters}, vol.~3, no.~1,
  pp. 96--101, 2018.

\bibitem{liu2021recurrent}
W.~Liu, N.~Mehdipour, and C.~Belta, ``Recurrent neural network controllers for
  signal temporal logic specifications subject to safety constraints,''
  \emph{IEEE Control Systems Letters}, 2021.

\bibitem{vielma2011modeling}
J.~P. Vielma and G.~L. Nemhauser, ``Modeling disjunctive constraints with a
  logarithmic number of binary variables and constraints,'' \emph{Mathematical
  Programming}, vol. 128, no.~1, pp. 49--72, 2011.

\bibitem{stlpy}
\BIBentryALTinterwordspacing
V.~Kurtz, ``stlpy: a python library for control from signal temporal logic
  specifications.'' [Online]. Available:
  \url{https://github.com/vincekurtz/stlpy}
\BIBentrySTDinterwordspacing

\bibitem{boyd2004convex}
S.~Boyd, S.~P. Boyd, and L.~Vandenberghe, \emph{Convex Optimization}.\hskip 1em
  plus 0.5em minus 0.4em\relax Cambridge university press, 2004.

\bibitem{drake}
\BIBentryALTinterwordspacing
R.~Tedrake and the Drake Development~Team, ``Drake: Model-based design and
  verification for robotics,'' 2019. [Online]. Available:
  \url{https://drake.mit.edu}
\BIBentrySTDinterwordspacing

\bibitem{gurobi}
\BIBentryALTinterwordspacing
L.~Gurobi~Optimization, ``Gurobi optimizer reference manual,'' 2021. [Online].
  Available: \url{http://www.gurobi.com}
\BIBentrySTDinterwordspacing

\bibitem{vega2018admissible}
W.~Vega-Brown and N.~Roy, ``Admissible abstractions for near-optimal task and
  motion planning,'' in \emph{IJCAI}, 2018.

\end{thebibliography}

\end{document}